\documentclass[11pt]{article}
\usepackage{fullpage}
\usepackage{complexity}
\usepackage{amsmath, amssymb, amsthm}
\usepackage{thm-restate}
\usepackage[hidelinks, colorlinks, allcolors=blue]{hyperref}
\usepackage{cleveref}
\usepackage{tikz}

\newtheorem{definition}{Definition}

\newlang{\OV}{OV}
\newlang{\SUM}{SUM}
\newlang{\APSP}{APSP}
\newcommand{\problem}[1]{\textsf{#1}}

\title{If Edge Coloring is Hard under SETH, then SETH is False}
\author{
	Alexander~S. Kulikov
	\thanks{JetBrains Research. Email: \texttt{alexander.s.kulikov@gmail.com}.}
	\and
	Ivan Mihajlin
	\thanks{JetBrains Research. Email: \texttt{ivmihajlin@gmail.com}.}
}

\begin{document}
	\date{}
	\maketitle
	
	\begin{abstract} 
		The Edge Coloring problem is notoriously hard:
		it~is still unknown whether it~can be~solved
		in~time $2^{o(n^2)}$ (let alone $2^{O(n)}$),
		where $n$~is the number of~nodes of~the input graph.
		Can one explain the lack of~such upper bounds
		by~deriving a~lower bound $2^{\Omega(n^2)}$
		from
		a~lower bound for SAT, $3$-SUM, or~APSP?
		In~this note, we~provide 
		a~negative answer for this question: 
		if~there is a~reduction showing
		that Edge Coloring cannot be~solved faster than in~$\alpha^{n^2}$
		(where $\alpha>1$ is an~explicit constant)
		under a~hypothesis that known algorithms for one
		of~the problems mentioned above are optimal, 
		then the corresponding
		hypothesis is~false.
	\end{abstract}

	\section{Overview of~New Results}
	The field of fine-grained complexity aims to establish tight connections between complexity of computational problems. 
	A~fine-grained reduction, denoted $(P, p(n)) \le (Q, q(n))$, implies that a~faster than $q(n)$-time
	algorithm for~$Q$ leads to a~faster than $p(n)$-time algorithm for~$P$.
	Equivalently, it~says that if~$P$~cannot be~solved faster than in~$p(n)$ time, then $Q$~cannot be~solved faster than in~$q(n)$~time.
	The standard assumptions in~this field 
	are hardness 
	of \problem{Satisfiability} (\SAT{})~\cite{ip99,ipz98}, 
	$3$-\SUM{}~\cite{go95,e99}, 
	\problem{Orthogonal Vectors} (\OV{})~\cite{w05}, 
	\problem{All Pairs Shortest Paths} (\APSP{})~\cite{DBLP:conf/focs/WilliamsW10}, \problem{Online Matrix-Vector Multiplication}~\cite{hkns15}, 
	and 
	\problem{Set Cover}~\cite{DBLP:journals/talg/CyganDLMNOPSW16}.
	For example, assuming the Strong Exponential Time Hypothesis (stating that for every $\varepsilon>0$ there exists~$k$ such that $k$-\SAT{} cannot be~solved
	in~time $2^{(1-\varepsilon)n}$), the following upper bounds
	are essentially optimal: 
	$2^{n}$~for \problem{Hitting~Set} (where $n$~is the size of the universe), 
	$2^n$~for \problem{NAE-SAT} (where $n$~is the number of the variables)~\cite{DBLP:journals/talg/CyganDLMNOPSW16},
	$n^k$ for \problem{$k$-Dominating Set}~\cite{DBLP:conf/soda/PatrascuW10} (where $n$~is the number of~nodes in~the input graph and $k\geq7$),
	$2^{\operatorname{tw}}$ for 
	\problem{Independent Set}~\cite{DBLP:journals/talg/LokshtanovMS18} (where $\operatorname{tw}$ is the treewidth of the input graph).
	
	In~the \problem{Edge Coloring} problem (also known~as \problem{Chromatic Index}),
	one is~given 
	an~undirected graph and is~asked to~color the edges 
	using the minimum number of~colors such that any two edges
	sharing a~node receive different colors. 
	The best known algorithm for \problem{Edge Coloring}
	transforms an~input graph with $n$~nodes and $m$~edges
	into a~line graph with $m$~nodes and finds its
	chromatic index in~time 
	$O^*(2^m)$~\cite{DBLP:journals/siamcomp/BjorklundHK09}.
	(The $O^*(\cdot)$ notation suppresses polynomial factors.)
	As~$m$ can be as~large as~$\binom n2$, the running time
	can be as~bad as~$2^{\Omega(n^2)}$. It~is not known whether
	\problem{Edge Coloring} can be~solved in~time $2^{o(n^2)}$.
	It~is natural to~ask whether one can prove a~lower bound
	$2^{\Omega(n^2)}$ for \problem{Edge Coloring} under
	an~assumption that \problem{SAT}, $3$-\problem{SUM}, or \problem{APSP} is hard.
	We~give a~negative answer 
	to~this question.
	
	\begin{restatable}{thm}{edgecoloringhardness}
		\label{thm:edgecoloringhardness}
		If~there exists a~fine-grained reduction
		showing that,
		assuming SETH,
		\problem{Edge Coloring} cannot 
		be~solved in~time $\alpha^{n^2}$, 
		for some $\alpha>1$,
		then SETH is~false.
	\end{restatable}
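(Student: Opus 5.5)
The plan is to argue by contraposition on the \emph{parameters} of the hypothetical reduction. I would not try to build any new reduction from SAT. Concretely, a fine-grained reduction $(k\text{-}\SAT, 2^n)\le(\problem{Edge Coloring},\alpha^{N^2})$ means the following. For every $\varepsilon>0$ there is a $\delta>0$ and an oracle algorithm that solves $k$-\SAT{} on $n$ variables in time $2^{(1-\delta)n}$. It makes queries to \problem{Edge Coloring} instances $G_i$ with $N_i$ nodes, subject to $\sum_i \alpha^{(1-\varepsilon)N_i^2}\le 2^{(1-\delta)n}$. So it suffices to exhibit, for the specific instances the reduction produces, an algorithm running in time $\alpha^{(1-\varepsilon)N_i^2}$ for some fixed $\varepsilon>0$. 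Plugging it into the oracle calls then yields a $2^{(1-\delta)n}$-time algorithm for $k$-\SAT{} for every $k$ with a uniform $\delta$, i.e.\ SETH fails.

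The first step is to dispose of sparse queries. The algorithm of~\cite{DBLP:journals/siamcomp/BjorklundHK09} runs in time $O^*(2^{m})$ on a graph with $m$ edges. Consider a query $G_i$ with $m_i\le(1-2\varepsilon)\log_2\alpha\cdot N_i^2$. For it the running time is at most $\alpha^{(1-\varepsilon)N_i^2}$ up to polynomial factors, which are absorbed by shrinking $\varepsilon$ slightly. Padding with isolated vertices does not help the reduction, since it changes $N_i$ but not $m_i$. Hence only queries with at least $c N_i^2$ edges, where $c=(1-2\varepsilon)\log_2\alpha>0$, remain. These graphs are dense, so their maximum degree satisfies $\Delta\ge 2cN_i$.

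The second and main step is an algorithm for dense graphs running in time $2^{o(N^2)}$, or at least $2^{c' N^2}$ with $c'<c$. By Vizing's theorem the chromatic index is $\Delta$ or $\Delta+1$, so only the decision ``is $G$ $\Delta$-edge-colorable?'' matters. For dense graphs with $\Delta\ge 2cN$ I would invoke the structural results on edge coloring of dense graphs: class~2 is then forced only by overfull subgraphs, in the spirit of the overfull conjecture. This is proved for large-degree regular graphs (the 1-factorization theorem) and in asymptotic forms for $\Delta$ linear in $N$. Overfull subgraphs can be detected in polynomial time when $\Delta>N/3$, and by brute force over vertex subsets in time $2^{N}=2^{o(N^2)}$ in general. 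That gives the required speed-up. Since exponentially small savings in the exponent suffice, an alternative I would try if the structural theorem is not available in the needed range is an approximate, randomized divide-and-conquer over color classes. This would split the $\Delta$ colors into two halves, guess a degree-balanced partition of the edges, and recurse. It costs $2^{O(N\log N)}$ per level for the degree profile, plus the remaining exponential search, which I would aim to bound by $2^{(c-\eta)N^2}$.

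I expect this dense case to be the main obstacle. The whole theorem reduces to showing that the reduction cannot rely on dense instances, where the known $2^{m}$ bound might be tight. A clean unconditional dense-graph algorithm is what makes the argument work for every $\alpha>1$ rather than only for $\alpha>\sqrt2$. For $\alpha>\sqrt2$ the first step alone already suffices, since $m\le N^2/2$.
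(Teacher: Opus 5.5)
\paragraph{Gap: the dense case.} Your first step is sound. So is your remark that $\alpha>\sqrt2$ is trivial, since the $O^*(2^{\binom N2})$ algorithm already runs in time $\alpha^{(1-\varepsilon)N^2}$. The gap is the dense case for $1<\alpha\le\sqrt2$. The theorem must hold for every $\alpha>1$, so your route needs, for every $c>0$, an unconditional algorithm for \problem{Edge Coloring} running in time $2^{cN^2}$ on graphs with $\Theta(N^2)$ edges. That is essentially a $2^{o(N^2)}$-time algorithm, which the paper explicitly notes is unknown. If you had it, the hypothesized lower bound would simply be false and the theorem would be vacuous. The tools you cite do not supply it:
\begin{itemize}
\item The overfull conjecture is open.
\item It is only conjectured for $\Delta>N/3$; the Petersen graph minus a vertex, with $\Delta=N/3$, is class~2 and has no overfull subgraph. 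Your dense range $\Delta\ge 2cN$ with small $c$ goes far below $N/3$, where no such characterization is even expected.
\item The proven cases require regularity or very large minimum degree, not merely $m\ge cN^2$.
\end{itemize}
The randomized divide-and-conquer alternative comes with no analysis showing it beats $2^{cN^2}$.

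\paragraph{Missing idea: few possible queries.} You never need to speed up an individual oracle call; you only need to exploit that there are few possible calls. The constraint $\alpha^{(1-\varepsilon)N_i^2}\le 2^{(1-\delta)n}$ forces every query to have only $N_i=O(\sqrt n)$ nodes. The paper combines this with self-reducibility of SAT:
\begin{itemize}
\item Branch on all but $\beta n$ variables, with $\beta=\log_2\alpha/(4(1-\delta))$. Then the reduction, run on each of the $2^{(1-\beta)n}$ restricted formulas, only queries graphs on at most $k=\sqrt{n/2}$ nodes.
\item Precompute the answers for all $2^{\binom k2}$ graphs on $k$ nodes with the $O^*(2^m)$ algorithm, in total time $O^*(2^{n/2})$.
\item Answer every oracle call by table lookup. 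Each restricted formula then costs $2^{(1-\delta)\beta n}$, for a total of $O^*(2^{(1-\min\{1/2,\delta\beta\})n})$.
\end{itemize}
This Four-Russians amortization works for every $\alpha>1$. It uses nothing about \problem{Edge Coloring} beyond its solvability in time exponential in the input length, so no structural graph theory is needed.
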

	
	The theorem above connects two specific problems, \problem{Edge Coloring} and \problem{SAT}.
	We~generalize the theorem by~showing
	that instead of~\problem{Edge Coloring}
	one can use any problem whose conjectured
	upper bound is~truly exponential with respect 
	to~the bit-length of~the input, and instead of~\problem{SAT}
	one can use any efficiently self-reducible problem.
	Namely, denote by~$s$ the bit-length of an~input.
	(We~assume that a~bit encoding 
	of a~problem is~fixed. For all problems considered
	later in~the text, we~will specify an~encoding.)
	Informally, we~say that a~problem is~efficiently 
	self-reducible if~one can reduce an~instance
	of~the problem to a~number of~sufficiently smaller
	instances of~the same problem. The formal definition
	is~given later in~the text (see Definition~\ref{def:selfreducibility}). It~is known that 
	\problem{SAT}, \problem{APSP}, and \problem{$3$-SUM}
	are efficiently self-reducible.
	Using the two just introduced notions, the generalization
	of~Theorem~\ref{thm:edgecoloringhardness} states that fine-grained
	reductions cannot establish truly exponential complexity lower bounds:
	if~there exists a~fine-grained reduction
	showing that, assuming that $P$~cannot be~solved faster than in~$p(s)$, $Q$~cannot be~solved faster than in~$\beta^s$, then $P$~can be~solved in~time $p(s)^{1-\varepsilon}$.
	
	\begin{restatable}{thm}{exponentiallyupwards}
		\label{thm:exponentiallyupwards}
		Let $P$~be an~efficiently self-reducible problem,
		$p \colon \mathbb{Z}_{\ge 0} \to \mathbb{Z}_{\ge 0}$
		be a~non-decreasing and efficiently computable function,
		and 
		$Q$~be a~problem that can be~solved in~time $O(\beta^s)$, where $\beta>1$ is a~constant.
		Then,
		if~there exists a~fine-grained reduction 
		$(P, p(s)) \le (Q, \gamma^s)$, for a~constant $\gamma>1$,
		then $P$~can be~solved in~time 
		$p(s)^{1-\epsilon}$ for some constant 
		$\epsilon > 0$.
	\end{restatable}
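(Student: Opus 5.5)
We combine the fine-grained reduction with the self-reducibility of~$P$. The key point is that the reduction produces $Q$-instances whose total running-time budget is governed by~$p(s)$. Since $Q$ is solvable in time $\beta^{s'}$, and a fine-grained reduction can only create instances of bit-length at most logarithmic in its own running time, solving the produced instances by brute force is cheap \emph{provided} we apply the reduction only to small enough instances of~$P$. Self-reducibility lets us cut $P$ into such small instances.

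Recall the shape of a fine-grained reduction $(P,p(s))\le(Q,\gamma^s)$. For every $\varepsilon>0$ there is $\delta>0$ and an algorithm that, on an instance of $P$ of size $s$, runs in time $p(s)^{1-\delta}$ and makes oracle calls to $Q$ on instances of sizes $s_1,\dots,s_k$ satisfying
\[
\sum_i \gamma^{(1-\varepsilon)s_i}\le p(s)^{1-\delta}.
\]
In particular each $s_i\le \frac{(1-\delta)\log p(s)}{(1-\varepsilon)\log\gamma}$. Replacing the oracle by the $O(\beta^{s_i})$ algorithm costs
\[
\sum_i\beta^{s_i}=\sum_i \bigl(\gamma^{(1-\varepsilon)s_i}\bigr)^{c}, \qquad c=\frac{\log\beta}{(1-\varepsilon)\log\gamma}.
\]
If $c\le 1$ we are done immediately. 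In general this is at most $\bigl(p(s)^{1-\delta}\bigr)^{c}=p(s)^{c(1-\delta)}$, which can exceed $p(s)$. So a direct substitution fails, and this is where self-reducibility enters.

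Fix $\varepsilon$ (say $1/2$) and the corresponding $\delta$. Using Definition~\ref{def:selfreducibility}, we reduce an instance of size $s$ to many instances of size $s'$, where $s'$ is chosen so that $p(s')\approx p(s)^{\tau}$ for a small constant $\tau>0$. This choice uses that $p$ is non-decreasing and efficiently computable. Self-reducibility should guarantee that the total cost of solving the subinstances at their "natural" rate $p(s')$ stays about $p(s)$. That is, the number of subinstances $N$ satisfies $N\cdot p(s')\le p(s)^{1+o(1)}$, and more precisely
\[
N\cdot p(s')^{1-\eta}\le p(s)^{1-\eta'}
\]
whenever each subinstance is solved in time $p(s')^{1-\eta}$; this is the formal content I expect the definition to provide. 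We then solve each subinstance by running the reduction and brute-forcing the $Q$-calls. This costs
\[
p(s')^{1-\delta}+\sum_i\beta^{s_i}\le p(s')^{1-\delta}+\Bigl(\sum_i\gamma^{(1-\varepsilon)s_i}\Bigr)\cdot\max_i\beta^{s_i}.
\]
The first factor is at most $p(s')^{1-\delta}$. The second is at most $p(s')^{c(1-\delta)}$, which is only polynomial in $p(s')$, not in $p(s)$.

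This is the main obstacle: $p(s')^{c}$ still exceeds $p(s')$ when $c>1$, so shrinking alone does not win. What I would try first is to exploit the fact that the reduction's guarantee must hold for every $\varepsilon<1$. Taking $\varepsilon$ close to $1$ makes each call size $s_i\le \frac{\log p(s')}{(1-\varepsilon)\log\gamma}$ large, but only the sum constraint matters. Instead, take $\varepsilon$ small and note the refined bound
\[
\sum_i\beta^{s_i}\le \sum_i \gamma^{(1-\varepsilon)s_i}\cdot\beta^{\max s_i}.
\]
Then choose $s'$ so small that $\beta^{\max s_i}\le p(s')^{c(1-\delta)}$ is negligible relative to $p(s)$, i.e. set $\tau$ with $\tau c<\delta'$, and bound the total as
\[
N\cdot p(s')^{1-\delta}\cdot p(s)^{\tau c}.
\]
With self-reducibility giving $N\,p(s')^{1-\delta}\le p(s)^{1-\delta''}$, choosing $\tau<\delta''/(2c)$ yields total time $p(s)^{1-\delta''/2}$, as required.

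The delicate part is verifying that Definition~\ref{def:selfreducibility} indeed yields a savings exponent $\delta''$ that is independent of $\tau$. It must also allow $s'$ to be as small as $p^{-1}(p(s)^{\tau})$. I expect the remaining bookkeeping (polynomial overheads, rounding of sizes) to be routine.
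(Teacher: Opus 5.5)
There is a genuine gap, and it is exactly at the step you flag as delicate. The saving $\delta''$ from self-reducibility cannot be taken independent of $\tau$. Definition~\ref{def:selfreducibility} gives a separate reduction, with its own saving, for each $\tau$, and for natural self-reductions that saving decays linearly in $\tau$. For \problem{SAT}, branching on all but $\tau n$ variables gives $N p(s')^{1-\delta}=2^{(1-\tau)n}\cdot 2^{(1-\delta)\tau n}=2^{(1-\tau\delta)n}$, so $\delta''=\tau\delta$.

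Your per-subinstance overhead $\beta^{\max_i s_i}$ is also linear in $\tau$ in the exponent. The reduction may make a single call of maximal size, whose brute-force cost is $p(s')^{c(1-\delta)}\approx p(s)^{\tau c(1-\delta)}$. So the condition $\tau<\delta''/(2c)$ is circular: it amounts to $c<\delta/2$, which fails in the case $c\ge 1$ you needed to handle. For \problem{SAT} your bound gives total time $2^{(1+\tau(c-\delta))n}$, worse than brute force for every $\tau$. Put differently, brute-forcing even one maximal call costs $p(s')^{c(1-\delta)}$, more than the natural cost $p(s')$ of its subinstance when $c(1-\delta)>1$. Paying for $Q$-calls separately in each subinstance can therefore never win, however small the subinstances are.

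The missing idea is to amortize the $Q$-cost across all subinstances with a lookup table, as in the Method of Four Russians. Because $Q$ is exponential in bit-length, there are at most $2^{\nu+1}$ instances of size at most $\nu$, and all can be solved once in time $O((2\beta)^{\nu})$. With $\nu=\frac{\log p(s)}{2\log(2\beta)}$ this preprocessing costs $p(s)^{1/2}$, paid once rather than per call.

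Then fix a constant $\tau$, depending only on $\beta,\gamma$ and the reduction, small enough that every call on a subinstance with $p(s')\le p(s)^{\tau}$ has size at most $\frac{\tau\log p(s)}{(1-\varepsilon)\log\gamma}+O(1)\le\nu$. Each oracle call is now a table lookup costing time polynomial in $\nu$, and each subinstance is solved in time $p(s')^{1-\delta}$ up to polylogarithmic factors. The self-reduction for this fixed $\tau$ gives total time $p(s)^{1/2}+p(s)^{1-\delta'}$ for some $\delta'>0$. How $\delta'$ depends on $\tau$ no longer matters, since no $\tau$-dependent overhead remains to be beaten.
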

	
	An~immediate corollary of~this theorem is that
	there are no~efficiently self-reducible problems of~truly exponential complexity: if $P$~can be~solved in~time $\alpha^s$ and a~fine-grained reduction proves that 
	$P$~cannot be~solved faster than in~$\beta^s$,
	then this reduction can be~used to~solve~$P$
	in~time $\gamma^s$, where $\gamma < \beta$.
	
	\subsection{Known Barriers to Hardness Proofs}
	{Our result is~close in~spirit to~\cite[Theorem 3.1]{DBLP:conf/icalp/AbboudB18} where it~is shown that
		if~the Longest Common Subsequence is~$O(n^2/\log^{\alpha} n)$-hard (for some constant $\alpha>0$) under SETH, then SETH 
		is~false. There are also two general ``hardness of~proving hardness'' results that show that proving hardness under SETH 
		is~difficult
		as~it~implies circuit lower bounds. }
	Namely, in~\cite{DBLP:conf/innovations/CarmosinoGIMPS16}, 
	it~is shown that 
	if~\problem{APSP} or~\problem{$3$-SUM} is~hard under SETH, then NSETH (nondeterministic version of~SETH) 
	is~false implying new circuit lower bounds. 
	In~\cite{DBLP:conf/soda/BelovaGKMS23}, 
	it~is shown that lower bounds of~the form~$\alpha^n$ under SETH 
	for various problems
	including \problem{Hamiltonian Path}, \problem{Graph Coloring},
	\problem{Independent Set}, \problem{$k$-Path}, \problem{$k$-Set Splitting} (and many others including 
	parameterized problems) would lead to~new
	Boolean or~arithmetic circuit lower bounds.

	\subsection{Upwards Reductions}
	Speaking about problems parameterized by bit-length~$s$,
	Theorem~\ref{thm:exponentiallyupwards} says that fine-grained reductions cannot go~too high, that is, 
	they are unable to prove a~lower bound of~the form $\alpha^s$.
	It~is interesting to~note
	that fine-grained reductions from the literature, viewed 
	as~reductions between problems parameterized by~$s$, almost never
	go~upwards, see Figure~\ref{figure:map}.
	Intuitively, this is because, usually, a~fine-grained
	reduction does not shrink its input length.
	At~the same time, 
	this intuition cannot be~made formal
	as~there exist 
	synthetic upwards reductions:
	take a~language $P \in \DTIME[s^3]$ and define a~language $P' \subseteq \{0,1\}^*$ as~follows: for every $x \in \{0,1\}^*$, $x \in P$ if~and only~if $x01^{|x|^2} \in P'$. 
	Then, clearly, $(P', O(s^{1.5})) \le (P, O(s^3))$.
	
	\begin{figure}[!ht]
		\begin{center}
			\begin{tikzpicture}[yscale=.7, xscale=1]
				\foreach \f/\y in {
					\widetilde{O}(s)/1, 
					\widetilde{O}(s^{\frac 43})/3, 
					\widetilde{O}(s^{1.5})/5, 
					\widetilde{O}(s^2)/7, 
					\widetilde{O}(s^4)/9, 
					2^{O(\frac{s}{\log s})}/11, 
					2^{O(s)}/13} {
					\node[inner sep=0mm] at (0, \y) {$\f$};
				}
				
				\tikzstyle{p} = [rectangle, inner sep=1mm, draw, align=center, fill=white, minimum height=6mm]
				
				\node[p] (ec) at (6.5, 13) {\problem{Edge Coloring}, $2^m$};
				\node[p] (sat) at (2, 11) {\problem{$k$-SAT}, $2^n$};
				\node[p] (hs) at (8, 11) {\problem{Hitting Set}, $2^n$};
				\node[p] (col) at (12, 11) {\problem{Sparse $3$-Coloring}, $\alpha^n$};
				\node[p] (ds) at (5,9) {\problem{$8$-Dominating Set}, $n^8$};
				\node[p] (4s) at (4.2, 7) {\problem{$4$-SUM}, $n^2$};
				\node[p] (3s) at (7.5, 7) {\problem{$3$-SUM}, $n^2$};
				\node[p] (ov) at (2, 7) {\problem{OV}, $n^2$};
				\node[p] (ed) at (10.7, 7) {\problem{Edit Distance}, $n^2$};
				\node[p] (smf) at (13, 5) {\problem{Sparse All Pairs Max Flows}, $n^{1.5}$};
				\node[p] (zwt) at (4, 5) {\problem{Exact Triangle}, $n^3$};
				\node[p] (apsp) at (8, 5) {\problem{APSP}, $n^3$};
				\node[p] (ad) at (6, 3) {\problem{$(7/4-\varepsilon)$-Approximate Sparse Diameter}, $n^{4/3}$};
				\node[p] (2s) at (2, 1) {\problem{$2$-SUM}, $n$};
				
				\tikzstyle{e} = [->, gray, >=latex]
				\tikzstyle{l} = [draw=none, fill=white, rectangle, inner sep=.5mm]
				
				\path (sat) edge[e, out=-8, in=90] node[l] {\cite{DBLP:journals/siamcomp/AbboudWY18}} (smf);
				\path (sat) edge[e, out=-20, in=90] node[l] {\cite{DBLP:conf/soda/PatrascuW10}} (ds);
				\path (sat) edge[e, out=-90, in=90] node[l] {\cite{w05}} (ov);
				\path (sat) edge[e, out=-120, in=172] node[l] {\cite{DBLP:journals/talg/Bonnet22}} (ad);
				\path (3s) edge[e] (4s);
				\path (4s) edge[e, out=-160, in=120] (2s);
				\path (3s) edge[e, out=-90, in=90] node[l] {\cite{DBLP:conf/focs/WilliamsW10}} (zwt);
				\path (ov) edge[e, out=28, in=165] node[l] {\cite{DBLP:journals/siamcomp/BackursI18}} (ed);
				\path (apsp) edge[e] node[l] {\cite{DBLP:conf/focs/WilliamsW10}} (zwt);
				\path (sat) edge[e] node[l] {\cite{DBLP:journals/talg/CyganDLMNOPSW16}} (hs);
			\end{tikzpicture}
		\end{center}
		\caption{A~partial web of~known fine-grained reductions. We~assume the following encoding of~the input for various problems. 
			For problems dealing with sequences of~$n$~integers
			(\problem{Orthogonal Vectors}, \problem{$k$-SUM}, \problem{Edit Distance}), the input is~encoded
			in~binary directly. Since one can assume that
			that the integers have absolute values bounded 
			by a~polynomial in~$n$, $s={\Theta}(n\log n)$. For problems dealing with sparse graphs (\problem{Sparse All Pairs Max Flows} and \problem{Sparse Diameter}), the input is~given as a~list 
			of~edges, hence $s={\Theta}(n\log n)$. For the remaining graph problems, the input graph can
			be~dense, so we~encode~it as~its adjacency matrix, hence $s=\Theta(n^2)$. For \problem{$k$-SAT}, the input is a~list of~$m$ clauses, each of~length at~most~$k$, hence $s=\Theta(m\log n)$.}
		\label{figure:map}
	\end{figure}
	
	\section{General Setting}
	\subsection{Fine-grained Reductions}\label{sec:fine-grained-reductions}
	\begin{definition}
		\label{deg:fg}
		Let $P,Q$ be~problems, $p, q \colon \mathbb{Z}_{\ge 0} \to \mathbb{Z}_{\ge 0}$ be non-decreasing functions, and ${\delta\colon\mathbb{R}_{>0}\to\mathbb{R}_{>0}}$.
		We say that $(P, p(n))$ \emph{$\delta$-fine-grained reduces} to~$(Q, q(n))$
		and write $(P,p(n)) \le_{\delta} (Q,q(n))$,
		if~for every $\varepsilon>0$ and $\delta=\delta(\varepsilon)>0$, there exists
		an~algorithm~$\mathcal A$ for~$P$ with oracle access to~$Q$,
		a~constant~$\alpha$, and a~function $t(n) \colon \mathbb{Z}_{\ge 0} \to \mathbb{Z}_{\ge 0}$, such that on~any instance of~$P$ of size~$n$, the algorithm~$\mathcal A$
		\begin{itemize}
			\item runs in time at~most $\alpha\cdot (p(n))^{1-\delta}$;
			\item produces at~most $t(n)$~instances of~$Q$ adaptively: every instance depends on~the previously produced instances
			as~well as~their answers of~the oracle for~$Q$;
			\item the sizes $n_i$ of the produced instances satisfy the inequality
			\[\sum_{i=1}^{t(n)}q(n_i)^{1-\varepsilon} \le \alpha \cdot (p(n))^{1-\delta} \, .\]
		\end{itemize}
	\end{definition}
	We say that $(P, p(n))$ \emph{fine-grained reduces} to~$(Q, q(n))$
	and write $(P,p(n)) \le (Q,q(n))$ if $(P,p(n)) \le_{\delta} (Q,q(n))$ for some function $\delta\colon\mathbb{R}_{>0}\to\mathbb{R}_{>0}$.
	
	If $(P,p(n)) \le (Q,q(n))$, then any improvement over the running time $q(n)$ for the problem~$Q$
	implies an~improvement over the running time~$p(n)$ for the problem~$P$: for any $\varepsilon>0$, there is $\delta>0$, such that if $Q$~can be~solved in time $O(q(n)^{1-\varepsilon})$, then $P$~can be~solved in time $O(p(n)^{1-\delta})$.
	
	\subsection{Standard Hardness Hypotheses}
	The most frequently used hardness assumptions 
	in~fine-grained reductions are the following.
	\begin{description}
		\item[SETH:] for every $\varepsilon>0$, there exists~$k$
		such that \problem{$k$-SAT} cannot be solved in~time $O(2^{(1-\varepsilon)n})$.
		\item[$3$-SUM Hypothesis:] for every $\varepsilon>0$, \problem{$3$-SUM} cannot be solved in~time $O(n^{2-\varepsilon})$.
		\item[APSP Hypothesis:] for every $\varepsilon>0$,
		\APSP{} cannot be solved in~time $O(n^{3-\varepsilon})$.
	\end{description}
	
	\subsection{Efficient Self-reducibility}
	
	Here, we formalize the notion of efficient self-reducibility as a fine-grained reduction from a problem to smaller instances of the same problem. 
	This notion may be~viewed as~a~fine-grained version
	of~the notion of~strong downward self-reducibility 
	introduced in~\cite{DBLP:conf/stoc/GoldwasserGHKR07}.
	
	\begin{definition}
		\label{def:selfreducibility}
		Let $P$~be a~problem and 
		$p \colon \mathbb{Z}_{\ge 0} \to \mathbb{Z}_{\ge 0}$~be
		a~non-decreasing function.
		We~say that~$P$ is \emph{efficiently self-reducible
			for~$p(n)$}, if~for any $0<\alpha<1$ there exists 
		a~fine-grained reduction $(P, p(n)) \le (P, p(n))$ 
		such that the size~$n'$ of~every instance
		solved by an~oracle
		satisfies
		\begin{equation}\label{eq:esr}
			p(n') \le p(n)^{\alpha} \, .
		\end{equation}
	\end{definition}
	
	In a~sense, efficient self-reducibility says that to~solve~$P$ faster than in~$p(n)$ time, it~suffices to~design
	a~faster than $p(n)$~time algorithm for small size instances.
	
	It~is known that problems from popular hardness assumptions
	satisfy this property.
	\begin{itemize}
		\item \SAT{} is efficiently self-reducible for $2^n$: branch on~all but $\alpha n$ variables to~get 
		$2^{n(1-\alpha)}$ instances each depending on~$\alpha n$ variables (inequality~\eqref{eq:esr} turns into $2^{\alpha n} \le (2^n)^\alpha$).
		\item \APSP{} is~efficiently self-reducible 
		for $n^3$ as~\APSP{} 
		is~equivalent to~min-plus matrix multiplication: to~multiply two $n \times n$-matrices, break them into 
		$n^\alpha \times n^\alpha$-matrices; then, it~remains
		to~multiply $n^{3-3\alpha}$ matrices of~size $n^{\alpha} \times n^{\alpha}$
		(inequality~\eqref{eq:esr} turns into $(n^\alpha)^3 \le (n^3)^\alpha$).
		\item $3$-\SUM{} is~efficiently self-reducible for~$n^2$: an~instance of~size~$n$
		can be~reduced to~$O(n^{2-2\alpha})$ instances of~size $n^\alpha$ as~shown in~\cite{DBLP:conf/icalp/LincolnWWW16}
		(inequality~\eqref{eq:esr} turns into $(n^\alpha)^2 \le (n^2)^\alpha$).
	\end{itemize}
	
	\section{Impossibility of Reductions to~Exponentially Hard Problems}
	
	In~this section, we~prove the main result: if \problem{Edge Coloring} is $\alpha^{n^2}$-hard under SETH, then SETH is~false. We~then generalize this result by~replacing \problem{Edge Coloring} with any problem whose conjectured bound is~truly
	exponential in~terms of~the input bit-length and \SAT{} with
	any self-reducible problem. The main idea of~the proof
	is~the following. Using a~reduction from \SAT{} to~\problem{Edge Coloring}, we~design an~algorithm that solves \SAT{} faster than in~$2^n$. 
	To~this end, 
	similarly to~the Method of~Four Russians,
	we~start by~preparing a~look-up table 
	containing answers for all \problem{Edge Coloring}
	instances of~certain small size~$n'$. Because of~the truly
	exponential time complexity of~\problem{Edge Coloring},
	the running time of~this step is~only polynomially
	slower than solving a~single \problem{Edge Coloring} instance
	of~size~$n'$. Then, using efficient self-reducibility of~\SAT{},
	we~reduce a~given CNF formula to~a~bunch of~sufficiently small formulas. Finally, using the assumed fine-grained reduction and the prepared look-up table, we~solve \SAT{} for each of the resulting formulas faster than the brute force search.
	
	\edgecoloringhardness*
	
	\begin{proof}
		Assume that there exists a~reduction \[(\problem{SAT}, 2^n) \le (\text{\problem{Edge Coloring}}, \alpha^{n^2}) \, . \]
		This~is an~algorithm~$\mathcal A$ that checks 
		whether an~input formula
		with $n$~variables is~satisfiable by~making oracle calls
		to an~algorithm for \problem{Edge Coloring}. The main property 
		of~$\mathcal A$ is~that it~solves \problem{SAT}
		in~time $2^{(1-\delta)n}$, for a~$\delta>0$, if the oracle solves
		\problem{Edge Coloring} in time $\alpha^{n^2/2}$
		for some $\delta>0$. This implies that 
		if~$\mathcal A$~makes an~oracle call to a~graph with $t$~nodes,
		then $\alpha^{t^2/2} \le 2^{(1-\delta)n}$, hence
		\begin{equation}\label{eq:tupper}
			t \le \left(\frac{2(1-\delta)n}{\log_2 \alpha}\right)^{\frac 12} \, .
		\end{equation}
		
		Using this observation, we~can design 
		a~faster than $2^n$ algorithm for \problem{SAT} 
		by~designing an~efficient oracle for \problem{Edge Coloring}.
		To~do this, we~solve \problem{Edge Coloring} for every graph with at~most
		$k=(n/2)^{1/2}$ nodes and save the result in a~table. Each graph 
		on at~most $k$~nodes can be~specified by a~characteristic 
		$0/1$-vector of~$\binom k2$ potential edges. Solving \problem{Edge Coloring} on such a~graph takes time $O^*(2^{\binom k2})$~\cite{DBLP:journals/siamcomp/BjorklundHK09}.
		Hence, the total running time of~this stage~is
		\[O^*(2^{2\binom k2}) = O^*(2^{k^2}) = O^*(2^{n/2}) \, .\]
		To~check satisfiability of~a~given formula~$F$ with $n$~variables, branch on all but $\beta n$ variables, 
		for a~parameter~$\beta$ to~be~chosen later.
		It~suffices to~solve \problem{SAT} for the resulting
		$2^{(1-\beta)n}$ formulas. Due to~\eqref{eq:tupper}, 
		on~every such formula, 
		$\mathcal A$~calls the oracle for \problem{Edge Coloring} on~graphs with at~most
		\[\left(\frac{2(1-\delta)\beta n}{\log_2\alpha}\right)^{1/2}\]
		nodes. 
		By~choosing $\beta=\frac{\log_2 \alpha}{4(1-\delta)}>0$,
		we~ensure that the answers for all such graphs
		are already in~the table 
		(computed at~the preprocessing stage).
		This way, we~can answer all oracle calls quickly
		meaning that 
		$\mathcal A$~solves \problem{SAT}
		for 
		every formula with $\beta n$ variables 
		in~time $2^{(1-\delta)\beta n}$. 
		
		The resulting algorithm for checking satisfiability 
		of~a~formula with $n$~variables prepares an~oracle,
		branches on all but $\beta n$ variables, and
		solves \problem{SAT} of the resulting $2^{(1-\beta)n}$
		formulas using the algorithm~$\mathcal A$ with the 
		constructed oracle.
		Its running time~is
		\[O^*\left(\max\left\{2^{n/2}, 2^{(1-\beta)n} \cdot 2^{(1-\delta)\beta n}\right\}\right)=O^*\left(2^{(1-\sigma) n}\right)\,,\]
		where $\sigma=\min\{1/2, \delta \beta\}>0$.
	\end{proof}
	
	\exponentiallyupwards*
	
	\begin{proof}
		If we apply the reduction from an~instance of the problem~$P$ of~size~$a$, then all of~the oracle calls made by reductions would be on~instances of the 
		problem~$Q$ of~size at~most $b = \frac{\log{p(a)}}{\log{\gamma}}$.
		With this in mind, we are going to enumerate and solve all of the instances of the problem~$Q$ of size 
		up~to~$\nu$, for some $\nu$ to be determined later.
		This will take time $O(2^{\nu} \beta^{\nu} = O((2 \beta)^{\nu})$. Then, we use self-reducibility of~$P$ to reduce an~instance of size~$s$ to some number of instances of size at most $p^{-1}(p(s)^{\alpha})$ for some $0<\alpha<1$ to~be determined later. 
		We~apply the given reduction from~$P$ to~$Q$ to each of the new small instances. Whenever the reduction makes an~oracle call, we~can just take a~peek into the lookup table. To~do this, we will need to~have 
		\begin{equation}
			\label{eq:first}
			\nu \geq \frac{\log{p(p^{-1}(p(s)^{\alpha}))}}{\log{\gamma}} = \frac{\log{p(s)^{\alpha}}}{\log{\gamma}} = \frac{\alpha}{\log{\gamma}}\log{p(s)} \, .
		\end{equation}
		
		On~the other hand, we~need 
		\[O((2 \beta)^{\nu}) \leq p(s)^{\frac{1}{2}}\,.\] 
		This gives~us \[\nu \leq \frac{1}{2 \log{2\beta}  } \log{p(s)} \,.\] 
		We can pick $\nu = \frac{1}{2 \log{2\beta}  } \log{p(s)}$ and then choose $\alpha>0$ to~make the inequality~\eqref{eq:first} correct too. 
		
		Overall, we~have a~preprocessing that takes time $p(s)^{1/2}$ and a~fine-grained reduction that solves all of~the oracle calls in~polynomial time. By the properties of fine-grained reduction this means that we~will solve all the small instances of~$P$ in time \[O(p(s)^{1/2} + p(s)^{(1-\delta)}) = O(p(s)^{1-\delta})\] for some $\delta > 0$.
	\end{proof}
	
	\bibliographystyle{plain}
	\bibliography{references}

@inproceedings{DBLP:conf/icalp/AbboudB18,
	author       = {Amir Abboud and
		Karl Bringmann},
	title        = {Tighter Connections Between Formula-SAT and Shaving Logs},
	booktitle    = {{ICALP}},
	series       = {LIPIcs},
	volume       = {107},
	pages        = {8:1--8:18},
	publisher    = {Schloss Dagstuhl - Leibniz-Zentrum f{\"{u}}r Informatik},
	year         = {2018}
}

@inproceedings{DBLP:conf/stoc/GoldwasserGHKR07,
	author       = {Shafi Goldwasser and
		Dan Gutfreund and
		Alexander Healy and
		Tali Kaufman and
		Guy N. Rothblum},
	title        = {Verifying and decoding in constant depth},
	booktitle    = {{STOC}},
	pages        = {440--449},
	publisher    = {{ACM}},
	year         = {2007}
}

@inproceedings{DBLP:conf/icalp/LincolnWWW16,
	author       = {Andrea Lincoln and
		Virginia Vassilevska Williams and
		Joshua R. Wang and
		R. Ryan Williams},
	title        = {Deterministic Time-Space Trade-Offs for k-SUM},
	booktitle    = {{ICALP}},
	series       = {LIPIcs},
	volume       = {55},
	pages        = {58:1--58:14},
	publisher    = {Schloss Dagstuhl - Leibniz-Zentrum f{\"{u}}r Informatik},
	year         = {2016}
}

@inproceedings{DBLP:conf/soda/BelovaGKMS23,
	author       = {Tatiana Belova and
		Alexander Golovnev and
		Alexander S. Kulikov and
		Ivan Mihajlin and
		Denil Sharipov},
	title        = {Polynomial formulations as a barrier for reduction-based hardness
		proofs},
	booktitle    = {{SODA}},
	pages        = {3245--3281},
	publisher    = {{SIAM}},
	year         = {2023}
}

@inproceedings{DBLP:conf/focs/WilliamsW10,
	author       = {Virginia Vassilevska Williams and
		Ryan Williams},
	title        = {Subcubic Equivalences between Path, Matrix and Triangle Problems},
	booktitle    = {{FOCS}},
	pages        = {645--654},
	publisher    = {{IEEE} Computer Society},
	year         = {2010}
}

@article{DBLP:journals/siamcomp/AbboudWY18,
	author    = {Amir Abboud and
		Virginia Vassilevska Williams and
		Huacheng Yu},
	title     = {Matching Triangles and Basing Hardness on an Extremely Popular Conjecture},
	journal   = {{SIAM} J. Comput.},
	volume    = {47},
	number    = {3},
	pages     = {1098--1122},
	year      = {2018},
	url       = {https://doi.org/10.1137/15M1050987},
	doi       = {10.1137/15M1050987},
	bibsource = {dblp computer science bibliography, https://dblp.org}
}

@article{DBLP:journals/talg/Bonnet22,
	author    = {{\'{E}}douard Bonnet},
	title     = {4 vs 7 Sparse Undirected Unweighted Diameter Is {SETH}-hard at Time $n^{4/3}$},
	journal   = {{ACM} Trans. Algorithms},
	volume    = {18},
	number    = {2},
	pages     = {11:1--11:14},
	year      = {2022},
	url       = {https://doi.org/10.1145/3494540},
	doi       = {10.1145/3494540},
	bibsource = {dblp computer science bibliography, https://dblp.org}
}

@article{DBLP:journals/siamcomp/BjorklundHK09,
	author    = {Andreas Bj{\"{o}}rklund and
		Thore Husfeldt and
		Mikko Koivisto},
	title     = {Set Partitioning via Inclusion-Exclusion},
	journal   = {SIAM Journal on Computing},
	volume    = {39},
	number    = {2},
	pages     = {546--563},
	year      = {2009},
	url       = {https://doi.org/10.1137/070683933},
	doi       = {10.1137/070683933},
	bibsource = {dblp computer science bibliography, https://dblp.org}
}

@inproceedings{DBLP:conf/soda/PatrascuW10,
	author    = {Mihai P{\v a}tra\c{s}cu and
		Ryan Williams},
	title     = {On the Possibility of Faster {SAT} Algorithms},
	booktitle = {SODA 2010},
	pages     = {1065--1075},
	publisher = {{SIAM}},
	year      = {2010},
}

@article{DBLP:journals/talg/LokshtanovMS18,
	author    = {Daniel Lokshtanov and
		D{\'{a}}niel Marx and
		Saket Saurabh},
	title     = {Known Algorithms on Graphs of Bounded Treewidth Are Probably Optimal},
	journal   = {ACM Transactions on Algorithms},
	volume    = {14},
	number    = {2},
	pages     = {13:1--13:30},
	year      = {2018},
	url       = {https://doi.org/10.1145/3170442},
	doi       = {10.1145/3170442},
	bibsource = {dblp computer science bibliography, https://dblp.org}
}

@inproceedings{DBLP:conf/innovations/CarmosinoGIMPS16,
	author    = {Marco L. Carmosino and
		Jiawei Gao and
		Russell Impagliazzo and
		Ivan Mihajlin and
		Ramamohan Paturi and
		Stefan Schneider},
	title     = {Nondeterministic Extensions of the Strong Exponential Time Hypothesis
		and Consequences for Non-reducibility},
	booktitle = {ITCS 2016},
	pages     = {261--270},
	publisher = {{ACM}},
	year      = {2016},
}

@article{DBLP:journals/talg/CyganDLMNOPSW16,
	author    = {Marek Cygan and
		Holger Dell and
		Daniel Lokshtanov and
		D{\'{a}}niel Marx and
		Jesper Nederlof and
		Yoshio Okamoto and
		Ramamohan Paturi and
		Saket Saurabh and
		Magnus Wahlstr{\"{o}}m},
	title     = {On Problems as Hard as {CNF-SAT}},
	journal   = {ACM Transactions on Algorithms},
	volume    = {12},
	number    = {3},
	pages     = {41:1--41:24},
	year      = {2016},
	url       = {https://doi.org/10.1145/2925416},
	doi       = {10.1145/2925416},
	bibsource = {dblp computer science bibliography, https://dblp.org}
}

@inproceedings{DBLP:journals/siamcomp/BackursI18,
	title={Edit Distance Cannot Be Computed in Strongly Subquadratic Time (Unless {SETH} is False)},
	author={Backurs, Arturs and Indyk, Piotr},
	booktitle={STOC 2015},
	pages={51--58},
	year={2015}
}

@inproceedings{ipz98,
	title={Which Problems Have Strongly Exponential Complexity?},
	author={Impagliazzo, Russell and Paturi, Ramamohan and Zane, Francis},
	booktitle={FOCS 1998},
	pages={653--653},
	year={1998},
	organization={IEEE}
}

@inproceedings{ip99,
	title={The Complexity of {$k$-SAT}},
	author={Impagliazzo, Russell and Paturi, Ramamohan},
	booktitle={CCC 1999},
	year={1999},
	pages     = {237--240},
	publisher = {{IEEE}},
}

@inproceedings{hkns15,
	title={Unifying and strengthening hardness for dynamic problems via the online matrix-vector multiplication conjecture},
	author={Henzinger, Monika and Krinninger, Sebastian and Nanongkai, Danupon and Saranurak, Thatchaphol},
	booktitle={STOC 2015},
	pages={21--30},
	year={2015},
	organization={ACM}
}

@article{go95,
	title={On a class of {$O(n^2)$} problems in computational geometry},
	author={Gajentaan, Anka and Overmars, Mark H.},
	journal={Computational geometry},
	volume={5},
	number={3},
	pages={165--185},
	year={1995},
	publisher={Elsevier}
}

@article{w05,
	title={A new algorithm for optimal 2-constraint satisfaction and its implications},
	author={Williams, Ryan},
	journal={Theoretical Computer Science},
	volume={348},
	number={2-3},
	pages={357--365},
	year={2005},
	publisher={Elsevier}
}

@article{e99,
	title={Bounds for Linear Satisfiability Problems},
	author={Erickson, Jeff},
	journal={Chicago Journal of Theoretical Computer Science},
	year={1999},
	pages={8}
}
	
\end{document}